\definecolor{Red}{rgb}{1,0,0}
\definecolor{Blue}{rgb}{0,0,1}
\definecolor{Olive}{rgb}{0.41,0.55,0.13}
\definecolor{Green}{rgb}{0,1,0}
\definecolor{MGreen}{rgb}{0,0.8,0}
\definecolor{DGreen}{rgb}{0,0.55,0}
\definecolor{Yellow}{rgb}{1,1,0}
\definecolor{Cyan}{rgb}{0,1,1}
\definecolor{Magenta}{rgb}{1,0,1}
\definecolor{Orange}{rgb}{1,.5,0}
\definecolor{Violet}{rgb}{.5,0,.5}
\definecolor{Purple}{rgb}{.75,0,.25}
\definecolor{Brown}{rgb}{.75,.5,.25}
\definecolor{Grey}{rgb}{.5,.5,.5}
\theoremstyle{plain}
\newtheorem{theorem}{Theorem}
\newtheorem{claim}{Claim}
\newtheorem{lemma}{Lemma}
\theoremstyle{remark}
\newtheorem{remark}{Remark}
\theoremstyle{definition}
\newtheorem{definition}{Definition}
\newcommand{\p}{{\rm P}}
\def\cP{{\cal P}}
\def\cW{{\cal W}}
\def\cX{{\cal X}}
\def\cY{{\cal Y}}
\title{Capacity regions of two new classes of 2-receiver broadcast channels}
\author{Chandra Nair\\
Department of Information Engineering\\
Chinese University of Hong Kong \\
Sha Tin, N.T., Hong Kong\\
Email: chandra@ie.cuhk.edu.hk}
\date{}
\begin{document}
\maketitle


\begin{abstract}
Motivated by a simple broadcast channel, we generalize the notions of a {\em less noisy} receiver and a {\em more capable} receiver to an {\em essentially less noisy receiver} and an {\em essentially more capable} receiver respectively. We establish the capacity regions of these classes by borrowing on existing techniques to obtain the characterization of the capacity region for certain new and interesting classes of broadcast channels. We also establish the relationships between the  new classes and the existing classes.
\end{abstract}



\section{Introduction}
\label{se:intro}

This paper is motivated directly by a simple broadcast channel setting, posed by Andrea Montanari(see Figure \ref{fig:bc}), consisting of a BSC(p) and BEC(e). Clearly if $e \leq 2p$, then the channel is {\em degraded}\cite{cov72} and the capacity\cite{gal74,ber73} is given by the union of rate pairs $(R_1,R_2)$
satisfying
\begin{align*}
R_1 &\leq I(U;Y_1) \\
R_1 + R_2 &\leq I(U;Y_1) + I(X;Y_2|U)
\end{align*}
over all $(U,X)$ such that $U \to X \to (Y_1,Y_2)$ form a Markov chain.

\begin{figure}[ht]
\begin{center}
\begin{psfrags}
\psfrag{a}[c]{$Y_1$}
\psfrag{b}[c]{$Y_2$}
\psfrag{c}[c]{$X$}
\psfrag{d}[c]{$p$}
\psfrag{e}[c]{$e$}
\includegraphics[width=0.4\linewidth,angle=0]{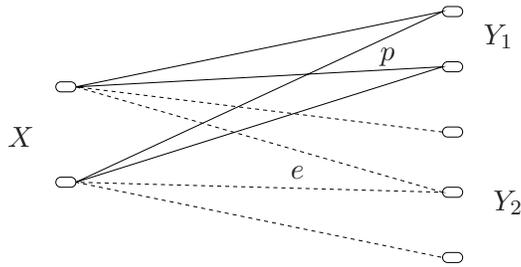}
\end{psfrags}
\caption{A broadcast channel consisting of a BSC(p) and BEC(e)}
\label{fig:bc}
\end{center}
\end{figure}

If $1 - H(p) \leq 1 - e$ then $Y_2$ would be a {\em more capable}\cite{kom75} receiver than $Y_1$ (see Parts 1,2 of Claim \ref{cl:easy} in the Appendix) and in this case capacity\cite{elg79} is given by the union of rate pairs satisfying 
\begin{align*}
R_1 &\leq I(U;Y_1) \\
R_1 + R_2 &\leq I(U;Y_1) + I(X;Y_2|U) \\
R_1 + R_2 &\leq I(X;Y_1)
\end{align*}
over all $(U,X)$ such that $U \to X \to (Y_1,Y_2)$ form a Markov chain. Therefore, the interesting case for determining the capacity region occurs when $1 - H(p) > 1 - e$.

Hence we restrict ourselves to the case when $1 - H(p) > 1 - e$, i.e. when the channel $X \to Y_1$ has a higher capacity than the channel $X \to Y_2$ . The Figure \ref{fig:btwo} plots $I(X;Y_1) - I(X;Y_2)$ for the case $p=0.1, e=0.5$. It is clear that neither is more capable than the other. In particular this setting does not fall into any class of broadcast channels for which the capacity region has been characterized. We address this regime and establish the capacity region. 

In fact we  establish the capacity region of a whole new class of broadcast channels (motivated of course by this example) that contains this broadcast channel, under the regime $1 - H(p) > 1 - e$, as a special case. 

\begin{figure}[ht]
\begin{center}
\begin{psfrags}
\psfrag{a}[c]{$I(X;Y_1) - I(X;Y_2)$}
\psfrag{b}[c]{$\p(X=0)$}
\includegraphics[width=0.4\linewidth,angle=0]{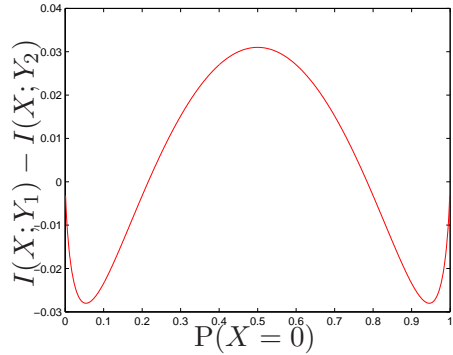}
\end{psfrags}
\caption{The function $I(X;Y_1) - I(X;Y_2)$ for BSC(0.1) and BEC(0.5)}
\label{fig:btwo}
\end{center}
\end{figure}

\subsection{Observation}
\label{subsec:obser}

The common theme between degraded, less noisy, and more capable channels is the existence of a dominant receiver who manages to decode the private messages for both the users. Let us recall the definition of the {\em less noisy}\cite{kom75} receiver. One requires  $ I(U;Y_2) \leq I(U;Y_1) $ 
to hold true for {\em every} $p(u,x)$ to classify receiver $Y_1$ as a less noisy receiver than $Y_2$. In this paper we remove the need for $ I(U;Y_2) \leq I(U;Y_1) $ 
holding true for {\em every} $p(u,x)$ and replace it by  $ I(U;Y_2) \leq I(U;Y_1) $ 
holding true for {\em a sufficiently large} class of distributions $p(u,x)$. We then show that for this relaxed definition of a less noisy receiver as well, the capacity regions can be obtained.

A similar development has been done for a more capable receiver as well.

\begin{remark}
The key contribution of this paper is the identification of capacity regions for  interesting classes of broadcast channels by proving that these channels belong to a slightly tweaked definition of a less noisy or a more capable receiver. Secondly, the tweaking of the definitions is done in such a way that the existing techniques are sufficient to establish the capacity regions.
\end{remark}

The organization of the paper is as follows. In section \ref{se:defn} we make the formal definitions and set up the required notation. In Section \ref{se:cap} we establish the capacity region of a class of two-receiver broadcast channels that has one receiver who is essentially less noisy when compared to the other.  In Section \ref{se:essln}  we identify certain interesting classes of channels that are neither less noisy nor more capable but are essentially less noisy. In Section \ref{se:mcap} we establish the capacity region of a class of two-receiver broadcast channels that has one receiver who is essentially more capable than the other. Finally in Section \ref{se:inc} we establish the various inclusions among these classes.

\section{Definitions and Notation}
\label{se:defn}

In \cite{cov72}, Cover introduced the notion of a broadcast channel through which one sender transmits information to two or more receivers. For the purpose of this paper we focus our attention on broadcast channels with precisely two receivers.

A {\em broadcast channel} (BC) consists of an
input alphabet $\mathcal{X}$ and output alphabets $\mathcal{Y}_1$
and $\mathcal{Y}_2$ and a probability transition function
$p(y_1,y_2|x)$. A $((2^{nR_1}, 2^{nR_2}),n)$ code for a broadcast
channel consists of an encoder
\[ X^n : \cW_1 \times \cW_2 \rightarrow \mathcal{X}^n, \]
and two decoders
\[ \hat{W}_1 : \mathcal{Y}_1^n \rightarrow \cW_1 \]
\[ \hat{W}_2 : \mathcal{Y}_2^n \rightarrow \cW_2, \]
where $\cW_1 = \{1,2,...,2^{nR_1}\}, \cW_2 = \{1,2,...,2^{nR_2}\}$.

The probability of error $\p_e^{(n)}$ is defined to be the
probability that the decoded message is not equal to the transmitted
message, i.e., 

\[ \p_e^{(n)} = \p \left(\{\hat{W}_1(Y_1^n) \neq W_1 \}\cup \{
\hat{W}_2(Y_2^n) \neq W_2 \} \right)
\]
where the message pair $(W_1,W_2)$  is assumed to be uniformly distributed over
$\cW_1 \times \cW_2$. 

A rate pair $(R_1,R_2)$ is said to be {\em achievable } for the
broadcast channel if there exists a sequence of $((2^{nR_1},
2^{nR_2}),n)$ codes with $\p_e^{(n)} \rightarrow 0$. The {\em
capacity region} of the broadcast channel with is the closure of the set of achievable rates.
{\em 
 The capacity region of the two user discrete memoryless channel is unknown.}

The capacity region is known for lots of special cases such as degraded, less noisy, more capable, deterministic, semi-deterministic, etc. - see  \cite{cov98} and the references therein. In this paper we establish the capacity region for two more classes of broadcast channels, one where one receiver is {\em essentially less noisy} compared to the other receiver; and the other where one receiver is {\em essentially more capable} than the other receiver. 

A channel $X \to Y_2$ is said to be a {\em degraded} version of the channel $X \to Y_1$ if $X \to Y_1 \to Y_2'$ is a Markov chain and the pair $(X,Y_2')$ is identically distributed as the pair $(X,Y_2)$. A receiver $Y_1$ is said to be {\em less noisy}\cite{kom75} compared to $Y_2$ if
$$ I(U;Y_2) \leq I(U;Y_1) $$
for all $p(u,x)$ such that $U \to X \to (Y_1,Y_2)$ forms a Markov chain. Finally, a receiver $Y_1$ is said to be {\em more capable}\cite{kom75} compared to $Y_2$ if
$$ I(X;Y_2) \leq I(X;Y_1) $$
for all $p(x)$.

\medskip

\begin{definition}
A class of distributions $\mathcal{P} = \{p(x)\}$ on the input alphabet $\mathcal{X}$ is said to be a {\em sufficient class} of distributions for a 2-receiver broadcast channel if the following holds:
Given any triple of random variables $(U,V,X)$ distributed\footnote{In all cases we assume that the tuple $(U,V,X,Y_1,Y_2)$ satisfies $(U,V) \to X \to (Y_1,Y_2)$ forms a Markov chain. In a discrete memoryless broadcast channel with no feedback this assumption is "automatically" satisfied. However it is necessary to state it explicitly to prevent choices like $U=Y_1$(except when $X \to Y_1$ is deterministic) and other strange choices.}  according to $p(u,v,x)$, there exists a distribution $q(u,v,x)$ that satisfies
\begin{align}
& q(x) \in \mathcal{P}, \nonumber \\
&I(U;Y_i)_p \leq I(U;Y_i)_q,~ i=1,2, \nonumber \\
&I(V;Y_i)_p \leq I(V;Y_i)_q, ~i=1,2, \nonumber \\
&I(X;Y_i|U)_p \leq I(X;Y_i|U)_q, ~i=1,2, \label{eq:suff}\\
&I(X;Y_i|V)_p \leq I(X;Y_i|V)_q, ~i=1,2, \nonumber \\
&I(X;Y_i)_p \leq I(X;Y_i)_q, ~i=1,2,\nonumber 
\end{align}
\end{definition} 

The notation $I(U;Y_1)_p$ denotes the mutual information between $U$ and $Y_1$ when the input is generated using $p(u,v,x)$.
\medskip

\begin{definition}
A receiver $Y_1$ is {\em essentially less noisy} compared to receiver $Y_2$ if there exists a sufficient class of distributions $\mathcal{P}$ such that whenever $p(x) \in \mathcal{P}$, for all $U \to X \to (Y_1,Y_2)$ we have
$$ I(U;Y_2) \leq I(U;Y_1). $$
\end{definition}

\medskip

\begin{remark}
\label{re:rem1}
Setting $\mathcal{P}$ to be the entire set of distributions $p(x)$ shows that a less noisy receiver is in particular an essentially less noisy receiver. However,  in Section \ref{se:essln} we will show that there are essentially less noisy receivers that are not less noisy.
\end{remark}

\medskip

\begin{definition}
A receiver $Y_1$ is {\em essentially more capable} compared to receiver $Y_2$ if there exists a sufficient class of distributions $\mathcal{P}$ such that whenever $p(x) \in \mathcal{P}$, for all $U \to X \to (Y_1,Y_2)$ we have
$$ I(X;Y_2|U) \leq I(X;Y_1|U). $$
\end{definition}

\medskip

\begin{remark}
\label{re:rem2}
Clearly the above condition holds when $Y_1$ is a more capable receiver than $Y_1$, since it holds under each conditioning of $U$.
Thus by setting $\mathcal{P}$ to be entire set of distributions on $\mathcal{X}$; if $Y_1$ is also a more-capable receiver than it is also an essentially more capable receiver. 
\end{remark}
\medskip

\section{The capacity region of a broadcast channel with an essentially less noisy receiver}
\label{se:cap}

\begin{theorem}
\label{th:cap}
 The capacity region of a two-receiver broadcast channel where $Y_1$ is essentially less noisy compared to $Y_2$ is given by the union of rate pairs $(R_1,R_2)$ such that 
 \begin{align*}
   R_2 & \leq I(U;Y_2) \\
   R_1 + R_2 & \leq I(U;Y_2) + I(X;Y_1|U)
 \end{align*}
 for some $U \to X \to (Y_1,Y_2)$ and $p(x) \in \mathcal{P}$. Here $\cP$ denotes any {\em sufficient class} of distributions that makes the receiver $Y_1$ essentially less noisy compared to receiver $Y_2$.
\end{theorem}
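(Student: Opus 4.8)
The plan is to prove the two matching inclusions: every pair in the claimed region is achievable, and every achievable pair lies in it.

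\emph{Achievability.} Fix $U\to X\to(Y_1,Y_2)$ with $p(x)\in\cP$ and a pair $(R_1,R_2)$ with $R_2\le I(U;Y_2)$ and $R_1+R_2\le I(U;Y_2)+I(X;Y_1|U)$. I would use superposition coding in which the cloud centre $U^n$ carries $W_2$ together with just enough of $W_1$ to lower the satellite rate to at most $I(X;Y_1|U)$, while the satellite $X^n$ carries the remaining part of $W_1$. Receiver $Y_2$ decodes only the cloud centre --- reliable as long as the cloud rate is at most $I(U;Y_2)$ --- and recovers $W_2$; receiver $Y_1$ decodes the cloud centre and then the satellite, which is reliable as long as the cloud rate is at most $I(U;Y_1)$ and the satellite rate is at most $I(X;Y_1|U)$. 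The satellite condition holds by construction, and the cloud rate equals $\max\{R_2,\,R_1+R_2-I(X;Y_1|U)\}$, which is at most $I(U;Y_2)$ by hypothesis; since $p(x)\in\cP$ and $Y_1$ is essentially less noisy we also have $I(U;Y_2)\le I(U;Y_1)$, so the cloud condition at $Y_1$ holds as well. Standard joint-typicality/error-probability estimates then finish this direction, and taking the union over all admissible $(U,X)$ with $p(x)\in\cP$ gives the claimed region.

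\emph{Converse.} I would start from Fano's inequality and the standard single-letterisation that underlies the converse for the more capable broadcast channel and the Nair--El Gamal outer bound (a Csisz\'ar-sum-identity argument with an auxiliary built from the past of $Y_2$ and the future of $Y_1$). This yields, for every achievable $(R_1,R_2)$, a pmf $p(u,v,x)$ with $(U,V)\to X\to(Y_1,Y_2)$ such that
\begin{align*}
R_2 &\le I(V;Y_2)_p, \\
R_1+R_2 &\le I(V;Y_2)_p+I(X;Y_1|V)_p,
\end{align*}
among further inequalities that we may discard. Now apply the sufficient-class property to the triple $(U,V,X)$: there is a pmf $q(u,v,x)$ with $q(x)\in\cP$ under which each of the mutual-information quantities in the definition of a sufficient class is at least as large as under $p$; in particular $I(V;Y_2)_p\le I(V;Y_2)_q$ and $I(X;Y_1|V)_p\le I(X;Y_1|V)_q$. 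Hence the two displayed inequalities remain valid with $p$ replaced by $q$, and writing $U:=V$ under $q$ we conclude that $(R_1,R_2)$ satisfies $R_2\le I(U;Y_2)$ and $R_1+R_2\le I(U;Y_2)+I(X;Y_1|U)$ for some $U\to X\to(Y_1,Y_2)$ with $p(x)\in\cP$; that is, $(R_1,R_2)$ lies in the claimed region.

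The routine parts are the error-probability estimates in the achievability and the Fano/chain-rule bookkeeping in the converse. The step I expect to require the most care is the single-letterisation producing the cross inequality $R_1+R_2\le I(V;Y_2)+I(X;Y_1|V)$; rather than reprove it I would cite the Csisz\'ar-sum-identity argument from the more capable / Nair--El Gamal setting. The one genuinely new ingredient is conceptual rather than technical: the ``sufficient class'' hypothesis is precisely what lets one transport the single-letter certificate coming from a general, channel-independent outer bound onto an input distribution in $\cP$, after which the essentially-less-noisy property --- used in the achievability above --- makes these two bounds tight.
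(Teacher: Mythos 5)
Your proposal is correct and follows essentially the same route as the paper: achievability via the standard superposition-coding inner bound, with the essentially-less-noisy condition (valid since $p(x)\in\cP$) making the constraint at receiver $Y_1$'s cloud-decoding step redundant, and a converse that starts from the known El Gamal/Nair--El Gamal outer bound and uses the sufficient-class property to transport the single-letter certificate onto a distribution with $q(x)\in\cP$. The paper simply cites both bounds rather than re-deriving them, but the logical structure and the role played by the two hypotheses are identical to yours.
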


\proof The theorem follows in a straightforward manner from the known achievability regions and outer bounds for the two receiver broadcast channels, as shown below.

\subsubsection*{The direct part}
\label{sse:ach}

It is well-known \cite{mar79} that the set of all rate pairs $(R_1,R_2)$ satisfying
\begin{align}
   R_2 & \leq I(U;Y_2) \nonumber \\
   R_1 + R_2 & \leq I(U;Y_2) + I(X;Y_1|U) \label{eq:ib}\\
   R_1 + R_2 & \leq I(X;Y_1) \nonumber
 \end{align} 
 for any $U \to X \to (Y_1,Y_2)$ is achievable via superposition coding.
Further if $p(x) \in \mathcal{P}$, we have $I(U;Y_2) \leq I(U;Y_1)$ and thus \eqref{eq:ib} reduces to the region in Theorem \ref{th:cap} and completes the proof of the achievability.

\subsubsection*{The converse part}
\label{sse:conv}

It is well-known \cite{elg79,mar79} that the set of all rate pairs $(R_1,R_2)$ satisfying
\begin{align}
   R_2 & \leq I(U;Y_2) \nonumber \\
   R_1 + R_2 & \leq I(U;Y_2) + I(X;Y_1|U) \label{eq:ob}\\
   R_1  & \leq I(X;Y_1) \nonumber
 \end{align} 
over all $U \to X \to (Y_1,Y_2)$ forms an outer bound to the capacity region of the broadcast channel. Clearly from the definition of the sufficient class $\mathcal{P}$ it is clear that one can restrict the union to be over $p(x) \in \mathcal{P}$.
Further if $p(x) \in \mathcal{P}$, we have $I(U;Y_2) \leq I(U;Y_1)$ and thus 
$$ I(U;Y_2) + I(X;Y_1|U) \leq I(X;Y_1). $$ 
This implies that \eqref{eq:ob} reduces to the region in Theorem \ref{th:cap} and completes the proof of the converse to the capacity region.

\section{A class of symmetric broadcast channels with an essentially less noisy receiver}
\label{se:essln}

In this section we prove that the class with an essentially less noisy receiver of channels is strictly larger than the class where where one receiver is {\em less noisy} \cite{kom75} compared to the other receiver. In particular this class contains the channel that served as the motivation behind this paper - the broadcast channel when one of the channels is BSC(p) and the other is BEC(e); where the pair $(p,e)$ satisfies $1 - H(p) \geq 1 - e$.

\begin{definition}
 A channel with input alphabet $\mathcal{X}$ ($\cX=\{0,1,... m-1\}$), output alphabet $\mathcal{Y}$ (of size $n$) is said to be {\em c-symmetric} if, for each $j=0,..,m-1$, there is a permutation $\pi_j(\cdot)$ of $\cY$ such that $\p(Y=\pi_j(y)|X=(i+j)_m)= \p(Y=y|X=i), \forall i,$ where $(i+j)_m = (i+j) \mod m$. 
\end{definition}

Observe that BSC and  BEC are examples of c-symmetric channels. 

A broadcast channel with input alphabet $\mathcal{X}$ and output alphabets $\mathcal{Y}_1, \mathcal{Y}_2$ is said to be {\em c-symmetric} if both the channels $ X \to Y_1$ and $X \to Y_2$ are c-symmetric.

\begin{lemma}
\label{le:unisuff}
The uniform distribution on $\mathcal{X}$ forms a sufficient class $\mathcal{P}$ for a c-symmetric broadcast channel.
\end{lemma}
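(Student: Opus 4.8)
The plan is to exploit the symmetry of the channel to "symmetrize" any given triple $(U,V,X)\sim p(u,v,x)$ into a new triple whose $X$-marginal is uniform, without decreasing any of the ten mutual-information quantities in \eqref{eq:suff}. Concretely, introduce an auxiliary random variable $J$ uniform on $\{0,1,\dots,m-1\}$ and independent of $(U,V,X)$, and define the shifted input $\tilde X = (X+J)_m$. Because the broadcast channel is c-symmetric, when $X=i$ is shifted to $(i+j)_m$ the corresponding outputs $Y_1,Y_2$ are merely relabeled by the permutations $\pi^{(1)}_j,\pi^{(2)}_j$; so if we let $\tilde Y_i = \pi^{(i)}_J(Y_i)$ be the correspondingly relabeled outputs, then $(\tilde X,\tilde Y_1,\tilde Y_2)$ has exactly the conditional law $p(y_1,y_2|x)$ of the channel, the marginal of $\tilde X$ is uniform on $\mathcal X$ (mixing any distribution with all its cyclic shifts gives the uniform distribution), and $(U,V,J)\to \tilde X\to(\tilde Y_1,\tilde Y_2)$ is still Markov. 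Now I would set $q(u,v,x)$ to be the law of $\bigl((U,J),V,\tilde X\bigr)$ — that is, fold $J$ into the auxiliary $U$ (or equivalently treat $(U,J)$ as the new $U$ and $(V,J)$ as the new $V$; see the obstacle below). Then $q(x)$ is uniform, hence in $\mathcal P$.

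The next step is to verify the six families of inequalities in \eqref{eq:suff}. For the "$U$" terms: $I\bigl((U,J);\tilde Y_i\bigr)_q = I(J;\tilde Y_i) + I(U;\tilde Y_i|J) \ge I(U;\tilde Y_i|J)$, and since conditioned on $J=j$ the relabeling $\pi^{(i)}_j$ is a bijection, $I(U;\tilde Y_i|J=j) = I(U;Y_i)_p$ for every $j$, so the conditional average is exactly $I(U;Y_i)_p$. Hence $I\bigl((U,J);\tilde Y_i\bigr)_q \ge I(U;Y_i)_p$, as required. The same argument handles the "$V$" terms with $(V,J)$. For the conditional terms $I(X;Y_i|U)$: under $q$, $I(\tilde X;\tilde Y_i\mid (U,J)) = \sum_j \Pr(J=j)\, I(\tilde X;\tilde Y_i\mid U, J=j)$, and conditioned on $J=j$ the map $(X,Y_i)\mapsto(\tilde X,\tilde Y_i)$ is a bijection, so each term equals $I(X;Y_i|U)_p$; thus equality holds. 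Likewise for $I(X;Y_i|V)_p$. Finally, for the unconditioned $I(X;Y_i)$ terms, $I(\tilde X;\tilde Y_i)_q \ge I(\tilde X;\tilde Y_i\mid J)_? $ is not immediate — instead use $I(\tilde X;\tilde Y_i)_q = H(\tilde Y_i)_q - H(\tilde Y_i\mid\tilde X)_q$; the second term is $\sum_j \Pr(J=j) H(Y_i\mid X)_p = H(Y_i\mid X)_p$ by bijectivity, while $H(\tilde Y_i)_q \ge $ ... here one wants $H(\tilde Y_i)_q \ge H(Y_i)_p$, which follows because $\tilde Y_i$ is distributed as the $J$-mixture of relabeled copies of $Y_i$ and, by the symmetry, mixing a distribution with its cyclic-shifted copies can only increase entropy (it is a convex combination whose average is "more spread out" — precisely, $H$ is concave and the mixture's law majorizes... ) — this needs a clean one-line justification, e.g. $H(\tilde Y_i)\ge H(\tilde Y_i\mid J)=H(Y_i)_p$, which does hold since conditioning reduces entropy. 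So in fact all six families hold, several with equality and the rest with the slack coming from a "conditioning reduces entropy" step.

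I expect the main obstacle to be a bookkeeping subtlety in how $J$ is attached to $U$ versus $V$: the definition of a sufficient class requires a \emph{single} joint law $q(u,v,x)$ that simultaneously dominates all of the $U$-quantities, the $V$-quantities, \emph{and} the conditional quantities $I(X;Y_i|U)$, $I(X;Y_i|V)$. Folding $J$ into both $U$ and $V$ at once — i.e. taking new auxiliaries $U' = (U,J)$ and $V' = (V,J)$ over the common input $\tilde X$ — is consistent (both are functions of $(U,V,J,\tilde X)$ and the Markov structure $(U',V')\to\tilde X\to(\tilde Y_1,\tilde Y_2)$ is preserved because $J$ is a common randomization that can be absorbed), and it makes every conditional-on-$U'$ and conditional-on-$V'$ quantity reduce to its $p$-value by the bijection argument above while the unconditioned $U,V,X$ quantities only go up. The one thing to state carefully is that $q(u,v,x)$ is obtained by this construction followed by the obvious marginalization, and that no constraint in \eqref{eq:suff} involves $U$ and $V$ jointly, so the slightly awkward fact that $U'$ and $V'$ share the component $J$ is harmless. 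Once that is pinned down, the lemma follows.
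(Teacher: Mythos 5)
Your proposal is correct and is essentially the paper's own argument: the paper likewise augments the input with an independent uniform shift $W'$, uses the c-symmetry permutations to identify the shifted outputs with relabeled copies of the originals, folds the shift into \emph{both} auxiliaries (setting $\tilde U=(W',U')$ and $\tilde V=(W',V')$), and verifies \eqref{eq:suff} by exactly the ``conditioning on the shift recovers the $p$-quantities'' computation you give. The step you flagged as needing a clean justification is resolved the way you ultimately state it ($H(\tilde Y_i)\geq H(\tilde Y_i\mid J)=H(Y_i)_p$, equivalently $I(\tilde X;\tilde Y_i)\geq I(\tilde X;\tilde Y_i\mid J)$ via the Markov chain $J\to\tilde X\to\tilde Y_i$), so there is no gap.
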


\begin{proof}
Let $\mathcal{X}=\{0,1,...,m-1\}$. Given a triple $(U,V,X)$ construct a tuple $(W',U',V',X')$ as follows:
\begin{align*}
 &\p(W'=j, U'=u, V'=v, X'=i) \\
&\quad = \frac 1m \p(U=u, V=v, X=(i+j)_m).
\end{align*}
 Further set $(W',U',V') \to X' \to (Y_1', Y_2')$ to be a Markov chain with $p(y_1',y_2'|x') \equiv p(y_1,y_2|x)$ (i.e. the channel transition probability remains the same).

Observe that
\begin{align}
&\p(W'=j, U'=u, Y_1'=y) \nonumber \\
&\quad = \sum_i \p(W'=j, U'=u, X'=i, Y_1'=y) \nonumber \\
&= \frac 1m \sum_i \p(U=u, X=(i+j)_m, Y_1=\pi_j(y)) \label{eq:cst}\\
&= \frac 1m \p(U=u,Y_1=\pi_j(y)). \nonumber
\end{align}

Similarly
\begin{align}
&\p(W'=j, U'=u, Y_2'=y) 
 = \frac 1m \p(U=u,Y_2=\sigma_j(y)). \label{eq:cst1}
\end{align}

It is easy to see that the following holds:
\begin{align*}
 \p(X'=i) &= \frac 1m ~ \forall i \\
 I(X';Y_i'|W'=j) &= I(X;Y_i) ~ \forall j, i=1,2 \\
 I(U';Y_i'|W'=j) &= I(U;Y_i) ~ \forall j, i=1,2 \\
 I(X';Y_i'|U',W'=j) &= I(X;Y_i|U), ~ \forall j, i=1,2 
\end{align*}
where all equalities (except the first one) follow from equations \eqref{eq:cst}, \eqref{eq:cst1}, and that entropy is unchanged by relabeling. Similar conditions also holds for the pair $(W',V')$.

Therefore setting $\tilde{U}=(W',U'), \tilde{V}=(W',V')$ and $q(u,v,x)$ to be the distribution induced by $(\tilde{U},\tilde{V},X')$ it is easy to see that the inequalities \eqref{eq:suff} are satisfied.
As $\p(X'=i) = \frac 1m ~ \forall i$ this establishes the sufficiency of the uniform distribution.
\end{proof}

\medskip

\begin{definition}
 In a c-symmetric broadcast channel $Y_1$ is said to be  a {\em dominantly c-symmetric receiver} if the following condition holds: for every $p(x)$
  $$ I(X;Y_1)_p - I(X;Y_2)_p \leq  I(X;Y_1)_u - I(X;Y_2)_u, $$
where $u(x)$ is the uniform distribution.
\end{definition}

In other words, uniform distribution also maximizes the difference $I(X;Y_1) - I(X;Y_2)$. 

\begin{claim}
\label{cl:bsc}
For the BSC(p), BEC(e) broadcast channel $Y_1$ is a dominantly c-symmetric receiver when $1-H(p) > 1-e$. 
\end{claim}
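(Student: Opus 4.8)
The plan is to verify the defining inequality $I(X;Y_1)_p - I(X;Y_2)_p \le I(X;Y_1)_u - I(X;Y_2)_u$ directly, by writing both mutual informations in terms of entropies and exploiting the structure of the BSC and BEC. Here $Y_1$ is the output of BSC($p$) and $Y_2$ the output of BEC($e$). For any input distribution $p(x)$ with $\p(X=1)=\alpha$, write $f(\alpha) := I(X;Y_1)_p - I(X;Y_2)_p$. For the BEC, $I(X;Y_2)_p = (1-e)\,H(\alpha)$, and for the BSC, $I(X;Y_1)_p = H(\alpha * p) - H(p)$, where $\alpha * p = \alpha(1-p) + (1-\alpha)p$ is the binary convolution. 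So the claim reduces to showing that
\[
g(\alpha) := H(\alpha * p) - H(p) - (1-e)H(\alpha)
\]
is maximized at $\alpha = \tfrac12$.

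First I would record that $g$ is symmetric about $\alpha = \tfrac12$ (since $H(\alpha)$, $H(\alpha*p)$ are both symmetric under $\alpha \mapsto 1-\alpha$) and concave on $[0,1]$. Concavity is the crux: $H(\alpha)$ is concave, and $\alpha \mapsto H(\alpha * p)$ is also concave in $\alpha$ (it is a concave function composed with an affine map). The sign of $1-e$ is positive, so $-(1-e)H(\alpha)$ is convex, and we cannot immediately conclude concavity of the difference of two concave-looking terms. The right way is to use the hypothesis $1 - H(p) > 1 - e$, i.e. $e > H(p)$. I would compute $g''(\alpha)$ explicitly: with $\phi(t) := \frac{d^2}{dt^2}\big(-t\log t - (1-t)\log(1-t)\big) = -\frac{1}{t(1-t)\ln 2}$, we get $g''(\alpha) = (1-2p)^2 \phi(\alpha*p) - (1-e)\phi(\alpha)$. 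Since $\phi < 0$ everywhere, $g'' \le 0$ is equivalent to $(1-2p)^2 \,|\phi(\alpha*p)| \le (1-e)\,|\phi(\alpha)|$, i.e. $(1-2p)^2 \cdot \frac{\alpha(1-\alpha)}{(\alpha*p)(1-\alpha*p)} \le 1-e$ for all $\alpha$. The ratio $\frac{\alpha(1-\alpha)}{(\alpha*p)(1-\alpha*p)}$ is at most $1$ (convolution with noise spreads the distribution toward uniform, increasing the variance proxy $t(1-t)$), so the left side is at most $(1-2p)^2$. Hence it would suffice that $(1-2p)^2 \le 1 - e$; but $e < 1 - (1-2p)^2$ need not follow from $e > H(p)$ alone.

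So the main obstacle is exactly this last gap: the crude bound $(1-2p)^2 \le 1-e$ is not implied by $e > H(p)$, and I expect $g$ is in fact \emph{not} globally concave for all $p$. The fix is to argue more carefully near $\alpha = \tfrac12$ versus near the endpoints. Near $\alpha=\tfrac12$, $\alpha*p \approx \tfrac12$ as well, the ratio $\frac{\alpha(1-\alpha)}{(\alpha*p)(1-\alpha*p)} \to 1$, and the condition $g''(\tfrac12) \le 0$ becomes $(1-2p)^2 \le 1-e$; one checks that under $e > H(p)$ this holds because $H(p) \ge 1 - (1-2p)^2$ fails — wait, actually $H(p) \le 1$ and $1-(1-2p)^2 \le 1$, and for small $p$ one has $H(p) \approx -2p\log p$ while $(1-2p)^2 \approx 1-4p$, so $1-(1-2p)^2 \approx 4p < H(p)$; thus $e > H(p) > 1-(1-2p)^2$, giving $(1-2p)^2 > 1-e$ — the opposite! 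This indicates the curvature at $\tfrac12$ can be positive, so the correct strategy is \emph{not} concavity but a direct comparison of $g(\alpha)$ with $g(\tfrac12)$.

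The robust plan, therefore, is: show $g(\tfrac12) = 1 - H(p) - (1-e) = e - H(p) > 0$ by hypothesis, and show $g(\alpha) \le g(\tfrac12)$ for all $\alpha$ by bounding $g(\alpha) \le H(\alpha*p) - H(p) \le 1 - H(p)$ trivially — no, that gives $g(\alpha) \le 1-H(p)$, not $\le e - H(p)$. The genuinely correct approach is to write $g(\tfrac12) - g(\alpha) = \big[1 - H(\alpha*p)\big] - (1-e)\big[1 - H(\alpha)\big] = D\big(\text{Bern}(\alpha*p)\,\|\,\text{Bern}(\tfrac12)\big) - (1-e)\,D\big(\text{Bern}(\alpha)\,\|\,\text{Bern}(\tfrac12)\big)$, and then invoke the data-processing / strong-data-processing behaviour of relative entropy under the BSC channel: passing $\text{Bern}(\alpha)$ through BSC($p$) contracts its divergence from the uniform (a fixed point of the channel) by a factor at least... this is where I would need the precise contraction coefficient. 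The cleanest rigorous route — and the one I would actually write up — is: both sides as functions of $\alpha$ are smooth and symmetric about $\tfrac12$; show the difference $h(\alpha) := g(\tfrac12) - g(\alpha)$ satisfies $h(\tfrac12)=0$, $h'(\tfrac12)=0$, and $h(\alpha) \ge 0$ on $[0,\tfrac12]$ by checking $h$ has no interior local maximum with negative value, equivalently that any critical point of $g$ in $(0,\tfrac12)$ is a local max of $g$ — reducing to a single-variable calculus lemma about the equation $g'(\alpha)=0$. I expect the hard part to be precisely this single-variable analysis of $g'$, which will use $1-H(p) > 1-e$ at the step where one compares the logarithmic derivative of $H(\alpha*p)$ against $(1-e)$ times that of $H(\alpha)$; I would present it as a short self-contained lemma and defer the elementary (if slightly tedious) inequality manipulations.
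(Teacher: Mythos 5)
You set up the right quantity---by symmetry the claim reduces to showing that $g(\alpha) = H(\alpha * p) - H(p) - (1-e)H(\alpha)$ is maximized at $\alpha = \tfrac12$---and you correctly recognize that global concavity of $g$ fails, so the problem is genuinely one of controlling the critical points of $g$. But that is exactly where the proposal stops: the ``short self-contained lemma'' you defer \emph{is} the whole proof, and nothing you have written establishes it. The paper (Parts 2--3 of Claim \ref{cl:easy} in the Appendix) carries out precisely this single-variable analysis: with $J(x)=\log_2\frac{1-x}{x}$, any zero of $D'(x)=(1-2p)J(x*p)-(1-e)J(x)$ must satisfy $L(x)=x(1-p)+p(1-x)$, where $L(x)=\bigl(\bigl(\tfrac{1-x}{x}\bigr)^{c}+1\bigr)^{-1}$ and $c=\tfrac{1-e}{1-2p}\in(0,1)$; one shows $L$ is concave on $[0,\tfrac12]$, so it meets that line at most twice there, and since $x=\tfrac12$ is one intersection there is at most one other critical point $r\in(0,\tfrac12)$. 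Combined with $D'(x)\to-\infty$ as $x\to 0^{+}$, this forces $D$ to decrease on $[0,r]$ and increase on $[r,\tfrac12]$, so the maximum is at an endpoint; the hypothesis $e>H(p)$ enters \emph{only} at the final step, comparing $D(0)=0$ with $D(\tfrac12)=e-H(p)>0$. This is not where you predicted the hypothesis would be used (a pointwise comparison of logarithmic derivatives); the critical-point analysis itself needs only $e\ge 2p$.

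Two further problems in the exploratory part. Your second-derivative test has the inequality reversed: since $\phi<0$, the condition $g''(\alpha)\le 0$ is equivalent to $(1-e)\,|\phi(\alpha)| \le (1-2p)^2\,|\phi(\alpha * p)|$, not the converse; at $\alpha=\tfrac12$ this reads $1-e\le(1-2p)^2$, i.e.\ $e\ge 4p(1-p)$, which \emph{does} follow from $e>H(p)$ because $H(p)\ge 4p(1-p)$. So $\alpha=\tfrac12$ is in fact a local maximum, and the failure of concavity occurs near the endpoints, not at the center---the opposite of what you concluded. Second, in your final plan the condition ``any critical point of $g$ in $(0,\tfrac12)$ is a local max of $g$'' is the wrong (and unachievable) target: what must be ruled out is an interior local \emph{maximum} of $g$ exceeding $g(\tfrac12)$, i.e.\ an interior local minimum of $h$ with negative value. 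As written, the proposal identifies the correct statement to prove but does not prove it.
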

The proof follows from part 3 of Claim \ref{cl:easy} in the appendix; also see Figure \ref{fig:btwo}.

\medskip

\begin{lemma}
\label{le:key}
In a c-symmetric broadcast channel, if $Y_1$ is a dominantly c-symmetric receiver then $Y_1$ is also an essentially less noisy receiver.
\end{lemma}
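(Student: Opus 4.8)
The plan is to use the uniform distribution as the sufficient class $\mathcal{P}$ (Lemma \ref{le:unisuff} guarantees this is a sufficient class for any c-symmetric broadcast channel), and then to verify the essentially-less-noisy inequality $I(U;Y_2) \le I(U;Y_1)$ for \emph{every} $U \to X \to (Y_1,Y_2)$ under the additional constraint that $X$ is uniform. So the only real content is: for a c-symmetric broadcast channel with a dominantly c-symmetric receiver $Y_1$, if $X$ is uniformly distributed then $I(U;Y_1) - I(U;Y_2) \ge 0$ for all $U$ with $U \to X \to (Y_1,Y_2)$.

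First I would write $I(U;Y_1) - I(U;Y_2) = \big(I(X;Y_1)_u - I(X;Y_1|U)\big) - \big(I(X;Y_2)_u - I(X;Y_2|U)\big)$, using that $I(U;Y_i) = I(X;Y_i) - I(X;Y_i|U)$ (valid since $U \to X \to Y_i$ and because the marginal of $X$ is fixed to uniform, $I(X;Y_i)_u$ does not depend on $U$). Rearranging, the claim becomes
\[
  I(X;Y_2|U) - I(X;Y_1|U) \ \le\ I(X;Y_1)_u - I(X;Y_2)_u .
\]
Now condition on $U=u$: each conditional distribution $p(x|u)$ is some distribution on $\mathcal{X}$, and the dominantly-c-symmetric hypothesis says precisely that $I(X;Y_1)_{p(\cdot|u)} - I(X;Y_2)_{p(\cdot|u)} \le I(X;Y_1)_u - I(X;Y_2)_u$ for every such distribution. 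Hence $I(X;Y_1|U) - I(X;Y_2|U) = \sum_u p(u)\big(I(X;Y_1)_{p(\cdot|u)} - I(X;Y_2)_{p(\cdot|u)}\big) \le I(X;Y_1)_u - I(X;Y_2)_u$, which is exactly the rearranged inequality. This gives $I(U;Y_1) \ge I(U;Y_2)$, and combined with Lemma \ref{le:unisuff} completes the proof.

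The main thing to be careful about — and really the only subtlety — is the opening identity $I(U;Y_i) = I(X;Y_i)_u - I(X;Y_i|U)$: it relies on the marginal distribution of $X$ being held fixed (here, uniform) as we range over choices of $U$, so that the unconditional term $I(X;Y_i)_u$ is a constant and the whole difference collapses to a statement about conditional mutual informations, each of which is handled conditioning-by-conditioning by the dominantly-c-symmetric inequality. No extra work is needed: once the algebraic rearrangement is in place, the hypothesis applies termwise and averages through. I do not anticipate a genuine obstacle here; the lemma is essentially a clean unwinding of the definitions once Lemma \ref{le:unisuff} is in hand.
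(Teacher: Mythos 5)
Your proposal is correct and follows essentially the same route as the paper's own proof: invoke Lemma \ref{le:unisuff} to reduce to uniform $p(x)$, use the identity $I(U;Y_i) = I(X;Y_i)_u - I(X;Y_i|U)$, and apply the dominantly c-symmetric inequality conditioning-by-conditioning before averaging over $U$. One cosmetic slip: your displayed ``rearranged claim'' transposes the subscripts on the left-hand side (it should read $I(X;Y_1|U) - I(X;Y_2|U) \le I(X;Y_1)_u - I(X;Y_2)_u$), but the inequality you actually derive and use in the next sentence is the correct one, so the argument stands.
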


\begin{proof}
Since the uniform distribution on $\cX$ forms a sufficient class; by Lemma \ref{le:unisuff} it suffices to
show that for all $(V,X)$ such that $p(x)$ is uniform we have
\begin{equation}I(V;Y_1) \geq I(V;Y_2).\label{eq:tos} \end{equation}
Given a pair $(V,X)$ let $p_v(x)$ be the distribution on $\cX$ when $V=v$. $Y_1$ is a  dominantly c-symmetric receiver implies
$$ I(X;Y_1)_{p_v} - I(X;Y_2)_{p_v} \leq  I(X;Y_1)_u - I(X;Y_2)_u. $$

Therefore
\begin{align} 
& I(X;Y_1|V) - I(X;Y_2|V) \nonumber \\
& \quad  = \sum_v \p(V=v) \left(  I(X;Y_1)_{p_v} - I(X;Y_2)_{p_v} \right)  \nonumber \\
& \quad  \leq \sum_v \p(V=v) \left(  I(X;Y_1)_u - I(X;Y_2)_u \right)  \nonumber \\
& \quad =  I(X;Y_1)_u - I(X;Y_2)_u. \label{eq:ineq12}
\end{align}

Since $V \to X \to (Y_1,Y_2)$ is Markov and  $p(x)$ is uniform, observe
\begin{align}
& I(X;Y_1|V) - I(X;Y_2|V) \nonumber \\
&\quad = I(X;Y_1)_u - I(V;Y_1) - \left(  I(X;Y_2)_u - I(V;Y_2) \right) \nonumber \\
& \quad =  I(X;Y_1)_u - I(X;Y_2)_u - \left(  I(V;Y_1) - I(V;Y_2) \right). \label{eq:ineq13}
\end{align}

The required inequality \eqref{eq:tos} follows from \eqref{eq:ineq12} and \eqref{eq:ineq13} respectively.

\section{The capacity region of a broadcast channel with an essentially more capable receiver}
\label{se:mcap}

\begin{theorem}
\label{th:cap1}
 The capacity region of a two-receiver broadcast channel where $Y_1$ is essentially more capable compared to $Y_2$ is given by the union of rate pairs $(R_1,R_2)$ such that 
 \begin{align*}
   R_2 & \leq I(U;Y_2) \\
   R_1 + R_2 & \leq I(U;Y_2) + I(X;Y_1|U) \\
   R_1 + R_2 & \leq I(X;Y_1)
 \end{align*}
 for some $U \to X \to (Y_1,Y_2)$ and $p(x) \in \cP$.  Here $\cP$ denotes any {\em sufficient class} of distributions that makes the receiver $Y_1$ essentially more capable compared to receiver $Y_2$.
\end{theorem}

\subsubsection*{The direct part}
\label{sse:ach1}

It is well-known \cite{mar79} that the set of all rate pairs $(R_1,R_2)$ satisfying
\begin{align}
   R_2 & \leq I(U;Y_2) \nonumber \\
   R_1 + R_2 & \leq I(U;Y_2) + I(X;Y_1|U) \label{eq:ib1}\\
   R_1 + R_2 & \leq I(X;Y_1) \nonumber
 \end{align} 
 for any $U \to X \to (Y_1,Y_2)$ is achievable via superposition coding. Restricting ourselves to $p(x) \in \mathcal{P}$, we have  the region in Theorem \ref{th:cap} and completes the proof of the achievability.

\subsubsection*{The converse part}
\label{sse:conv1}

It is well-known \cite{elg79,nae07} that the set of all rate pairs $(R_1,R_2)$ satisfying
\begin{align}
   R_2 & \leq I(U;Y_2) \nonumber \\
   R_1 + R_2 & \leq I(U;Y_2) + I(X;Y_1|U) \label{eq:ob1}\\
   R_1  & \leq I(V;Y_1) \nonumber \\
   R_1 + R_2 & \leq I(V;Y_1) + I(X;Y_2|V) \nonumber
 \end{align} 
over all $(U,V) \to X \to (Y_1,Y_2)$ forms an outer bound to the capacity region of the broadcast channel. Clearly from the definition of the sufficient class $\mathcal{P}$ it is clear that one can restrict the union to be over $p(x) \in \mathcal{P}$.
Further if $p(x) \in \mathcal{P}$, since $Y_1$ is an essentially more capable receiver we have $I(X;Y_2|V) \leq I(X;Y_1|V)$ and thus 
$$ I(V;Y_1) + I(X;Y_2|V) \leq I(X;Y_1). $$ 
This implies that \eqref{eq:ob1} is contained inside the (achievable) region in Theorem \ref{th:cap} and completes the proof of the converse to the capacity region. (Indeed it is easy to see that setting $V=X$ is optimal and thus reduces the region in \eqref{eq:ob1} to the region in Theorem \ref{th:cap1}.)
\end{proof}

\section{On inclusion relationships between classes of broadcast channels}
\label{se:inc}

In this section, we present the various relationships between the classes of 2-receiver broadcast channels that were discussed in the paper.

\begin{figure}[ht]
\begin{center}
\begin{psfrags}
\psfrag{a}[c]{$I$}
\psfrag{b}[c]{$II$}
\psfrag{c}[c]{$III$}
\psfrag{d}[c]{$IV$}
\psfrag{e}[c]{$V$}
\includegraphics[width=0.7\linewidth,angle=0]{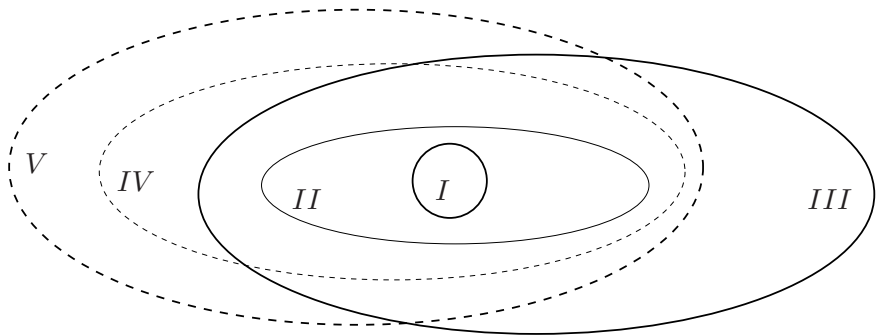}
\end{psfrags}
\caption{The classes of broadcast channels with a {\em superior} receiver. 
$I$ - degraded, $II$ - less noisy, $III$- essentially less noisy,
 $IV$ - more capable, $V$ - essentially more capable}
\label{fig:incl}
\end{center}
\end{figure}

\begin{claim}
We claim that the following relationships, as shown in Figure \ref{fig:incl}, hold
\begin{itemize}
\item[$(i)$] {\em Degraded} $\subset$ {\em less noisy} $\subset$ {\em more capable}, 
\item[$(ii)$] {\em less noisy} $\subset$ {\em essentially less noisy}, 
\item[$(iii)$] {\em essentially less noisy} $\nRightarrow$ {\em more capable}, 
\item[$(iv)$] {\em essentially less noisy} $\nRightarrow$ {\em essentially more capable}, 
\item[$(v)$] {\em more capable} $\nRightarrow$ {\em essentially less noisy}.
\item[$(vi)$] {\em more capable} $\subset$ {\em essentially more capable}, 
\end{itemize}
\end{claim}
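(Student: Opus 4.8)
## Proof Plan for the Inclusion Claims

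The plan is to treat the six parts separately, using the definitions from Section~\ref{se:defn} together with Lemma~\ref{le:key} and the motivating BSC/BEC example to supply the separating examples. Parts $(i)$, $(ii)$ and $(vi)$ are the routine containments; parts $(iii)$, $(iv)$ and $(v)$ are the genuine separations, so I expect the latter to carry the real content.

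For $(i)$, I would recall the classical facts: if $X\to Y_1\to Y_2$ is Markov, then for any $U\to X$ one has $I(U;Y_2)\le I(U;Y_1)$ by the data-processing inequality, giving \emph{degraded} $\subseteq$ \emph{less noisy}; and if $I(U;Y_2)\le I(U;Y_1)$ for all $p(u,x)$, then taking $U=X$ gives $I(X;Y_2)\le I(X;Y_1)$, so \emph{less noisy} $\subseteq$ \emph{more capable}. Strictness of both inclusions is standard (e.g.\ a suitable pair of binary-input channels that are less noisy but not degraded, and a pair that is more capable but not less noisy); I would simply cite the known examples rather than reconstruct them. For $(ii)$, the content is exactly Remark~\ref{re:rem1}: taking $\mathcal{P}$ to be the set of all $p(x)$ shows a less noisy receiver is essentially less noisy, and Section~\ref{se:essln} (Lemma~\ref{le:key} applied to the BSC$(p)$/BEC$(e)$ channel with $1-H(p)>1-e$) exhibits an essentially less noisy receiver that is not less noisy, indeed not even more capable, which also settles $(iii)$. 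For $(vi)$, the content is Remark~\ref{re:rem2}: a more capable receiver satisfies $I(X;Y_2|U)\le I(X;Y_1|U)$ under every conditioning, so with $\mathcal{P}$ the full set it is essentially more capable; strictness will follow once I produce, for $(v)$, an essentially more capable channel that is not more capable.

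For $(iv)$ and $(v)$ the strategy is to revisit the BSC$(p)$/BEC$(e)$ broadcast channel. In the regime $1-H(p)>1-e$, Claim~\ref{cl:bsc} and Lemma~\ref{le:key} show $Y_1$ is essentially less noisy; Figure~\ref{fig:btwo} shows $I(X;Y_1)-I(X;Y_2)$ changes sign, so $Y_1$ is not more capable than $Y_2$ and (one checks) $Y_2$ is not more capable than $Y_1$ either, hence neither is essentially more capable of the other in the straightforward way — this gives $(iv)$, modulo verifying that no \emph{other} sufficient class could rescue the essentially-more-capable property; I would argue this using that the uniform input already maximizes $I(X;Y_i|U)$ differences by the c-symmetry/Lemma~\ref{le:unisuff} machinery, so it suffices to check the condition fails at uniform inputs. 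For $(v)$, I would take the \emph{complementary} regime or a swapped/modified channel in which one receiver is more capable but the essentially-less-noisy inequality $I(U;Y_2)\le I(U;Y_1)$ fails for some $U\to X\to(Y_1,Y_2)$ even after restricting to any sufficient class; again reducing to uniform inputs via Lemma~\ref{le:unisuff} and producing an explicit binary $U$ witnessing $I(U;Y_2)>I(U;Y_1)$.

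The main obstacle I anticipate is the ``for all sufficient classes'' quantifier in the negative statements $(iii)$, $(iv)$, $(v)$: to say a channel is \emph{not} essentially less (or more) capable I must rule out \emph{every} sufficient class $\mathcal{P}$, not just the uniform one. The clean way around this is to observe that if $\mathcal{P}$ is a sufficient class then in particular, for the triple $(U,V,X)$ of interest, the dominating $q$ has $q(x)\in\mathcal{P}$ and inherits all the relevant inequalities \eqref{eq:suff}; for the c-symmetric channels at hand, the uniform distribution dominates every $p(x)$ (Lemma~\ref{le:unisuff}), so any sufficient class must be ``compatible'' with uniform inputs, and the defining inequality then reduces to a single scalar inequality at the uniform input which I can check directly fails. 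That reduction is the crux; once it is in place, the remaining computations are the kind of elementary $H(\cdot)$-algebra already packaged in Claim~\ref{cl:easy} of the appendix.
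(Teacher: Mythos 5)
Your plan for parts $(i)$--$(iii)$ and $(v)$ tracks the paper closely, and your observation that the negative statements require ruling out \emph{every} sufficient class --- handled by showing that for these symmetric channels any sufficient class is forced to contain the uniform distribution, so it is enough to exhibit a failure at a uniform input --- is exactly the crux the paper relies on. But there is a genuine gap in part $(vi)$. You write that strictness of \emph{more capable} $\subset$ \emph{essentially more capable} ``will follow once I produce, for $(v)$, an essentially more capable channel that is not more capable.'' The example you actually plan for $(v)$ is a channel that \emph{is} more capable but fails to be essentially less noisy; since every more capable channel is automatically essentially more capable (Remark \ref{re:rem2}), that example cannot separate the two classes, and no BSC/BEC pair will: in every parameter regime one of the two receivers is more capable or essentially less noisy, and the separation you need is of a different kind. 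The paper has to build a fresh example --- a quaternary-input channel where $X \to Y_1$ is clean on $\{0,1\}$ and useless on $\{2,3\}$ while $X \to Y_2$ is BSC$(0.1)$ on $\{0,1\}$ and BSC$(0.4)$ on $\{2,3\}$ --- so that $Y_1$ fails to be more capable (test with $p(x)$ uniform on $\{2,3\}$) yet the singleton class $\{$uniform on $\{0,1\}\}$ is sufficient and $Y_1$ dominates there. Your proposal contains neither this example nor any route to one.

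A secondary, smaller issue is part $(iv)$: your opening inference ``neither receiver is more capable, hence neither is essentially more capable in the straightforward way'' is not a valid step (it is contradicted by $(vi)$ itself), and although you correctly add that one must check failure at uniform inputs, you never exhibit the witness. The paper does: take $U$ binary with $\p(U=0)=\tfrac12$, $\p(X=0|U=0)=\epsilon$, $\p(X=0|U=1)=1-\epsilon$, so that the marginal $p(x)$ is uniform (hence lies in every sufficient class) while each conditional law sits near an endpoint where $D(x)<0$, giving $I(X;Y_2|U)>I(X;Y_1|U)$ for small $\epsilon$. You should supply this (or an equivalent) construction explicitly rather than leaving it as ``one checks.''
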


\begin{proof}
Part $(i)$ was established in \cite{kom75}. Part $(ii)$ follows from Remark \ref{re:rem1} and Section \ref{se:essln}.  Part $(iii)$ follows from Figure \ref{fig:btwo} and Section \ref{se:essln}.

Part $(iv)$: This is again easy to deduce from Figure \ref{fig:btwo}. Take $U$ to be a binary random variable with $\p(U=0) = \frac 12$ and take $\p(X=0|U=0) = \epsilon$, $\p(X=0|U=1) = 1-\epsilon$. For sufficiently small $\epsilon$ we have that $I(X;Y_2|U) > I(X;Y_1|U)$ and hence $Y_1$ is not an {\em essentially more capable} receiver than $Y_2$. Note that $p(x)$ is uniform and  hence any sufficient class $\mathcal{P}$ must contain the uniform distribution.

Part $(v)$: Let $X \to Y_1$ be BEC(0.5) and $X \to Y_2$ be BSC(0.1101). Observe that $0.5 = 1-e > 1-H(p) \approx 0.4998$, and from part 2 of Claim \ref{cl:easy} in the Appendix we can see that $Y_1$ is a more capable receiver than $Y_2$. Let $U \to X$ be BSC(0.05), and set $\p(U=0)=0.5$. This implies $\p(X=0)=0.5 \in \mathcal{P}$ and it is easy to see that $0.3568 \approx I(U;Y_1) < I(U;Y_2) \approx 0.3924$ and thus $Y_1$ is not an essentially less noisy receiver than $Y_2$. (Note that this also implies that {\em essentially more capable} $\nRightarrow$ {\em essentially less noisy}.)

Part $(vi)$: From Remark \ref{re:rem2} it is clear that {\em more capable} $\subseteq$ {\em essentially more capable}. Hence it suffices to prove that {\em essentially more capable} $\nRightarrow$ {\em more capable}. To this end, consider the following channel.  The alphabets are given by $\mathcal{X}=\{0,1,2,3\}, \mathcal{Y}_1=\mathcal{Y}_2=\{0,1\}$.
The channel $X \to Y_1$ is a perfectly clean channel when $\mathcal{X} \in \{0,1\}$, and is the completely noisy  BSC(0.5) when $\mathcal{X}\in \{2,3\}$.  The channel $X \to Y_2$ is a BSC(0.1) when $\mathcal{X}\in \{0,1\}$ and BSC(0.4) when $\mathcal{X}\in \{2,3\}$. When $p(x)$ is uniform on $\cX = \{2,3\}$ we have $I(X;Y_2) > I(X;Y_1)$; implying $Y_1$ is not a more capable receiver than $Y_2$.
However it is easy to show that  $p(x)$  uniform on $\cX = \{0,1\}$ forms a sufficient class, and clearly on this sufficient class $Y_1$ is a more capable receiver than $Y_2$. This example shows that there are essentially more capable receivers that need not be more capable.
\end{proof}

\section*{Acknowledgement}
The author wishes to that Andrea Montanari for bringing up the motivating question  as well as some interesting discussions. The author also wishes to thank Abbas El Gamal and the anonymous referees for comments that improved the presentation.

\bibliographystyle{IEEEtran}
\bibliography{mybiblio}

\appendix
Consider a broadcast channel with two receivers. Let $X \to Y_1$ be BSC(p), $ 0 \leq p \leq \frac 12$ and $X \to Y_2$ be BEC(e).  Let
\begin{align*}
D(x)  &\stackrel{\triangle}{=} I(X;Y_1) - I(X;Y_2) \\
&= H(x*p)-(1-e)H(x)-H(p)
\end{align*} be the difference $I(X;Y_1) - I(X;Y_2)$ conditioned on  $\p(X=0) = x$. Observe that the function is symmetric about $x=\frac 12$, i.e. $D(x)=D(1-x)$.

\begin{claim}
\label{cl:easy}
The function $D(x)$ has the following properties:
\begin{enumerate}
\item When $e \leq 2p$, $D(x)$ monotonically decreases in the interval $[0,\frac 12]$.
\item When $2p < e \leq H(p)$, $D(x)$ monotonically decreases in the interval $[0,r]$, and
monotonically increases in the interval $[r, \frac 12]$ for some $r \in (0, \frac 12]$. The maximum occurs at $x=0$, i.e. $D(x) \leq 0, \forall x \in [0,1]$.
\item When $H(p) < e \leq 1$, $D(x)$ monotonically decreases in the interval $[0,r]$, and
monotonically increases in the interval $[r, \frac 12]$ for some $r \in (0, \frac 12)$. The maximum occurs at $x=\frac 12$, i.e. $D(x) \leq D(\frac 12), \forall x \in [0,1]$.
\end{enumerate}
\end{claim}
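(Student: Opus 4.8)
The plan is to analyze the single-variable function $D(x) = H(x*p) - (1-e)H(x) - H(p)$ on $[0,\tfrac12]$ via its derivatives, exploiting the symmetry $D(x) = D(1-x)$ so that it suffices to understand the half-interval. First I would compute $D'(x)$. Writing $x*p = x(1-p)+(1-x)p = p + x(1-2p)$, the chain rule gives $D'(x) = (1-2p)\,H'(x*p) - (1-e)H'(x)$, where $H'(t) = \log\frac{1-t}{t}$. Since $H'$ is strictly decreasing and $H'(\tfrac12)=0$, we have $D'(\tfrac12) = 0$ automatically (consistent with the symmetry), so $x=\tfrac12$ is always a critical point; the question in each regime is whether it is a max, a min, or an inflection, and whether other critical points intervene in $(0,\tfrac12)$.

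The cleanest route is to study the sign of $D'(x)$ near the endpoints and then show $D'$ changes sign at most once on $(0,\tfrac12)$. Near $x=0$: $H'(x)\to+\infty$ and $H'(x*p)\to H'(p)$, which is finite, so $D'(0^+) = -\infty$ whenever $1-e>0$; hence $D$ is decreasing just to the right of $0$ in all three regimes. Near $x=\tfrac12$: a second-order expansion $D''(\tfrac12) = (1-2p)^2 H''(\tfrac12\!*\!p) - (1-e)H''(\tfrac12)$, with $H''(t) = -\frac{1}{t(1-t)}\cdot\frac{1}{\ln 2}$ (so $H''<0$ everywhere and $H''(\tfrac12) = -\frac{4}{\ln 2}$), tells us the local behavior at the midpoint: $D''(\tfrac12) > 0$ iff $(1-e)\cdot 4 < \frac{(1-2p)^2}{(\tfrac12*p)(1-\tfrac12*p)}$, i.e. $(1-e) < \tfrac14(1-2p)^2/(\tfrac14 - (\tfrac12-p)^2\cdots)$ — this algebraic inequality should reduce to a clean threshold on $e$ in terms of $p$. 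I expect the three cases of the claim to correspond exactly to: (1) $D'<0$ throughout $[0,\tfrac12)$ (midpoint a saddle/endpoint min), (2) $D'$ has one sign change and $D(0) \ge D(\tfrac12)$, (3) $D'$ has one sign change and $D(\tfrac12) > D(0)$. The unimodality (at most one interior critical point, hence the "decreasing then increasing" shape) follows because $D'(x) = 0 \iff (1-2p)H'(p+x(1-2p)) = (1-e)H'(x)$; the left side, as a function of $x$, is $H'$ evaluated at an affine map, strictly decreasing, and one shows the equation $(1-2p)H'(p+x(1-2p)) = (1-e)H'(x)$ has at most one root in $(0,\tfrac12)$ by a monotonicity/convexity argument on the ratio $H'(x*p)/H'(x)$ or by bounding the number of sign changes of $D'$ using that $D'' $ itself is monotone-ish — I would use the explicit form of $H''$ to show $D''$ has at most one zero on $(0,\tfrac12)$, which bounds the zeros of $D'$ by two, and then the boundary signs pin it down to exactly the claimed shape.

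Finally, to locate the global maximum on $[0,1]$ (the assertions "$D(x)\le 0$" in case 2 and "$D(x)\le D(\tfrac12)$" in case 3), it suffices by symmetry and the established shape to compare the two endpoint-type values $D(0)$ and $D(\tfrac12)$. We have $D(0) = H(p) - 0 - H(p) = 0$ and $D(\tfrac12) = H(\tfrac12) - (1-e)H(\tfrac12) - H(p) = e - H(p)$. Thus $D(\tfrac12) \le D(0) = 0 \iff e \le H(p)$, which is precisely the dividing line between cases 2 and 3; in case 1 ($e\le 2p$) one checks separately that $2p \le H(p)$ for $p\in[0,\tfrac12]$ (a standard inequality) so $D(\tfrac12)\le 0 = D(0)$ there too, consistent with monotone decrease. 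So the boundary value computation, almost trivial, hands us the max locations once the qualitative shape is known.

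The main obstacle is the middle step: rigorously proving that $D'$ has at most one zero in the open interval $(0,\tfrac12)$, i.e. genuine unimodality rather than some wiggle. The endpoint signs and the boundary values are routine; the shape control is where care is needed. I would handle it by writing the critical-point equation as $g(x) := (1-2p)\frac{H'(x*p)}{H'(x)} = 1-e$ (valid for $x\in(0,\tfrac12)$ where $H'(x)>0$) and showing $g$ is strictly monotone on $(0,\tfrac12)$ — equivalently that $\frac{d}{dx}\log\frac{H'(x*p)}{H'(x)}$ has constant sign — using the explicit logarithmic derivative $\frac{H''(t)}{H'(t)}$ and the fact that $t\mapsto \frac{H''(t)}{H'(t)}$ is monotone. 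If strict monotonicity of $g$ fails to hold cleanly, the fallback is to bound the zeros of $D''$ on $(0,\tfrac12)$ directly from its closed form and deduce a bound of $2$ on the zeros of $D'$, then eliminate the spurious possibility using $D'(0^+)=-\infty$ and $D'(\tfrac12)=0$.
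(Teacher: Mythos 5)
Your overall skeleton matches the paper's: compute $D'(x)=(1-2p)J(x*p)-(1-e)J(x)$ with $J=H'$, use $D'(0^+)=-\infty$, show $D'$ has at most one zero in $(0,\tfrac12)$, and finish by comparing $D(0)=0$ with $D(\tfrac12)=e-H(p)$, which is exactly the dividing line $e\lessgtr H(p)$ between Parts 2 and 3. The difference is in how unimodality is certified. The paper rewrites $D'(x)=0$ as $L(x)=x*p$ with $L(x)=\left(\left(\tfrac{1-x}{x}\right)^{c}+1\right)^{-1}$, $c=\tfrac{1-e}{1-2p}\in(0,1)$, proves $L$ is concave on $[0,\tfrac12]$, and invokes the fact that a concave curve meets a line at most twice, one intersection being already known at $x=\tfrac12$. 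Your primary route --- strict monotonicity of $g(x)=(1-2p)J(x*p)/J(x)$ --- is riskier than you suggest: the inequality it requires, $\phi(x)\ge(1-2p)\phi(x*p)$ with $\phi(x)=\bigl(x(1-x)\ln\tfrac{1-x}{x}\bigr)^{-1}$, degenerates to equality at first order as $x\to\tfrac12$ (both sides blow up like $(\tfrac12-x)^{-1}$ with the same constant), so it cannot be settled by a soft monotonicity-of-the-ratio argument and would need a second-order analysis. Your fallback, however, is fully workable and arguably cleaner than the paper's: $D''(x)=0$ is equivalent to $(1-e)(x*p)(1-x*p)=(1-2p)^2x(1-x)$, a genuine quadratic in $x$ (leading coefficient $e(1-2p)^2\neq 0$ in the relevant regimes) that is symmetric about $x=\tfrac12$, hence has at most one root in $(0,\tfrac12)$; Rolle then caps the zeros of $D'$ on $[0,\tfrac12]$ at two, one of which is the forced zero at $x=\tfrac12$ (since $J(\tfrac12)=J(\tfrac12 * p)=0$), and the sign of $D'$ near $0$ pins down the decreasing-then-increasing shape. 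Two small gaps to close when writing it up: for Part 1 you should give the direct one-line argument ($e\le 2p$ gives $1-e\ge 1-2p$, and $J(x*p)\le J(x)$ since $x*p\ge x$ on $[0,\tfrac12]$, so $D'\le 0$) rather than routing it through the critical-point count; and in Part 3 you need the endpoint comparison $D(\tfrac12)=e-H(p)>0=D(0)$ to force the interior critical point $r$ to lie strictly inside $(0,\tfrac12)$ with an actual sign change of $D'$.
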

\begin{proof}
Let $J(x) = \log_2\frac{1-x}{x}$. Observe that
\begin{equation}\frac{d}{dx} D(x) = (1-2p)J(x*p) - (1-e)J(x).\label{eq:eq123} \end{equation}
For $x \in [0, \frac 12]$, when $e \leq 2p$ we have  $\frac{d}{dx} D(x) \leq 0$, since  $0<J(x*p) < J(x)$  and establishes Part 1. 

From \eqref{eq:eq123} any $x$ such that $\frac{d}{dx} D(x) = 0$ must satisfy
\begin{equation} \label{eq:zero} \left( \big(\frac{1-x}{x}\big)^c + 1\right)^{-1} = x(1-p)+p(1-x),\end{equation}
where $c=\frac{1-e}{1-2p}$. Define
$$L(x) =  \left( \big(\frac{1-x}{x}\big)^c + 1\right)^{-1}. $$
When $e \geq 2p$, we have $0 < c < 1$. Then it is easy to see that $L(x)$ is concave in $x \in [0,\frac 12]$. Observe that
$$\frac{d}{dx} L(x) = \frac{c}{\big(x(1-x)\big)^{1-c}\big[ (1-x)^c + x^c)\big]^2 x^{1-c}} $$
and since the functions: $\big(x(1-x)\big)^c, (1-x)^c + x^c, x^{1-c}$ increase in $x \in [0,\frac 12]$, we have $\frac{d^2}{dx^2}L(x) \leq 0$. This implies that $L(x)$ can intersect the line $R(x)=x(1-p)+p(1-x)$ at possibly no more than two points on $x \in [0,\frac 12]$. Since $L(\frac 12) = \frac 12 = R(\frac 12)$, there is  at most one other  solution $r \in (0,\frac 12)$ to \eqref{eq:zero} when $x \in (0,\frac 12)$. 

Since $\frac{d}{dx} D(x) \to -\infty$ as $x \to 0^+$, it is clear that $D(x)$ decreases in $[0,r]$ and increases in $[r,\frac 12]$. The maximum can therefore be obtained by comparing $D(0)=0$ and $D(\frac 12)=e-H(p)$. This establishes Parts 2,3. 
\end{proof}

\begin{claim}
Consider a broadcast channel with two receivers. Let $X \to Y_1$ be BSC(p), $ 0 \leq p \leq \frac 12$ and $X \to Y_2$ be BEC(e). Then the following holds:
\begin{enumerate}
\item $0 \leq e \leq 2p$: $Y_1$ is a degraded version of $Y_2$.
\item $2p \leq e \leq 4p(1-p)$: $Y_2$ is less noisy  than  $Y_1$, but is not a degraded version.
\item $4p(1-p) \leq e \leq H(p)$: $Y_2$ is more capable (but not less noisy) than $Y_2$.
\item $H(p) \leq e \leq 1$: $Y_1$ is essentially less noisy than  $Y_2$.
\end{enumerate}
\end{claim}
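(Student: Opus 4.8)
The four intervals partition $[0,1]$ (using $2p\le 4p(1-p)\le H(p)\le 1$ for $p\in[0,\tfrac12]$), so it suffices to verify each item on its stated range, the ``but not\,\dots'' clauses being needed only on the interiors since the endpoints are shared boundary cases. For Part~1 I would exhibit an explicit channel $Q$ from $Y_2$ to $\widehat Y_1$ realising $X\to Y_2\to\widehat Y_1$ with $(X,\widehat Y_1)\sim(X,Y_1)$: given the BEC output, flip a non-erased symbol with probability $a=\frac{p-e/2}{1-e}$ and on an erasure emit an independent fair coin; the end-to-end crossover probability is $(1-e)a+\tfrac e2=p$, and $a\in[0,1]$ exactly when $e\le 2p$ (the bound $a\le1$ uses $p\le\tfrac12$), so $Y_1$ is a degraded version of $Y_2$ on $[0,2p]$. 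Conversely, writing $\mathrm{BSC}(p)=\mathrm{BEC}(e)\,Q$ for a stochastic $Q$ and subtracting the two input rows forces two entries of $Q$ to differ by $\frac{1-2p}{1-e}$, which is $\le 1$ only when $e\le 2p$; hence for $e>2p$ no degrading channel exists.

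For Parts~2 and~3, put $F(x)=I(X;Y_2)-I(X;Y_1)=(1-e)H(x)-H(x*p)+H(p)$ with $x=\p(X=0)$, so $F=-D$ for $D$ as in Claim~\ref{cl:easy}. Since $F$ depends on the input law only through $x$, for every $U\to X\to(Y_1,Y_2)$ we get
\[
I(U;Y_2)-I(U;Y_1)=F\!\left(\mathbb E\big[\p(X=0\mid U)\big]\right)-\mathbb E\big[F(\p(X=0\mid U))\big],
\]
so $Y_2$ is less noisy than $Y_1$ iff $F$ is concave on $[0,1]$. Then
\[
F''(x)=-\frac{1}{\ln 2}\left(\frac{1-e}{x(1-x)}-\frac{(1-2p)^2}{(x*p)(1-x*p)}\right),
\]
and because $x*p$ lies between $x$ and $\tfrac12$ we have $(x*p)(1-x*p)\ge x(1-x)$ with equality only at $x=\tfrac12$, so $\sup_x \frac{(1-2p)^2 x(1-x)}{(x*p)(1-x*p)}=(1-2p)^2$, attained at $x=\tfrac12$. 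Hence $F''\le 0$ throughout $[0,1]$ iff $1-e\ge(1-2p)^2$, i.e. iff $e\le 4p(1-p)$: this gives ``$Y_2$ less noisy'' for Part~2, and ``not degraded'' follows since on $(2p,4p(1-p)]$ we saw above that degradedness already fails. For Part~3, ``$Y_2$ more capable'' means $D\le 0$ everywhere; by Part~2 of Claim~\ref{cl:easy} the maximum of $D$ on $2p<e\le H(p)$ is $D(0)=0$, giving the claim, and for $e>4p(1-p)$ the formula above has $F''(\tfrac12)>0$, so $Y_2$ is not less noisy than $Y_1$.

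For Part~4, BSC and BEC are c-symmetric, hence so is the broadcast channel. When $e\ge H(p)$ we have $1-H(p)\ge 1-e$, and Parts~2--3 of Claim~\ref{cl:easy} show the maximum of $D(x)$ equals $\max\{0,\,e-H(p)\}=e-H(p)=D(\tfrac12)$, attained at the uniform input; thus $Y_1$ is a dominantly c-symmetric receiver, and Lemma~\ref{le:key} then gives that $Y_1$ is an essentially less noisy receiver (for $e>H(p)$ this is exactly Claim~\ref{cl:bsc} combined with Lemma~\ref{le:key}). The only non-bookkeeping step is the sign analysis of $F''$ in Parts~2--3: locating the less-noisy threshold at $4p(1-p)$ hinges on the fact that $\inf_x (x*p)(1-x*p)/(x(1-x))=1$, i.e. that convolution with a $\mathrm{BSC}(p)$ contracts $\p(X=0)$ toward $\tfrac12$ and strictly enlarges $x(1-x)$ away from $x=\tfrac12$. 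Everything else reduces to Claim~\ref{cl:easy}, Claim~\ref{cl:bsc}, Lemma~\ref{le:key} and the known inclusions {\em degraded} $\subset$ {\em less noisy} $\subset$ {\em more capable}.
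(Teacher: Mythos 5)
Your proof is correct and follows essentially the same route as the paper: the same case split, the same reduction of the less-noisy question to convexity of $D(x)$ (your concavity of $F=-D$), and the same appeal to Claim \ref{cl:easy} and Lemma \ref{le:key} for the more-capable and essentially-less-noisy parts. You additionally supply the details the paper declares ``well-known'' or ``straightforward'' --- the explicit degrading channel and its converse for the $e\le 2p$ threshold, and the second-derivative computation locating the convexity threshold at $e=4p(1-p)$ --- and these are all accurate.
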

\begin{proof}
Part 1 is well-known and easy to establish. Part 4 follows from Claim \ref{cl:easy} and Section \ref{se:essln}. We also know from Parts 1 and 2 of Claim \ref{cl:easy} that when $0 \leq e \leq H(p)$ $D(x) \leq 0$, i.e. $Y_2$ is a more capable receiver than $Y_1$. Therefore to complete the proof of the claim it suffices to show that $Y_2$ is a less noisy receiver than $Y_1$ {\em if and only if} $0 \leq e \leq 4p(1-p)$.

\noindent The following statements are equivalent (the proof is immediate and omitted): 

$(i)$ $Y_2$ is a less noisy receiver than $Y_1$;

$(ii)$ $\forall U \to X \to (Y_1,Y_2)$, $I(X;Y_1|U) - I(X;Y_2|U) \geq I(X;Y_1) - I(X;Y_2)$;

$(iii)$ $I(X;Y_1)_{p(x)} - I(X;Y_2)_{p(x)}$ is a convex function of $p(x)$.

Therefore $Y_2$ is a less noisy receiver than $Y_1$ {\em if and only if} $D(x)$ is convex for $x \in [0,1]$. It is straight forward to see that $D(x)$ is convex {\em if and only if} $0 \leq e \leq 4p(1-p)$.
\end{proof}

\end{document}